\newcommand{\Beginproof}{{\em Proof.}  }
\newcommand{\Endproof}{\hfill$\Box$}
\DeclareMathOperator{\exv}{\mathbb{E}\textrm{ }}
\begin{document}

\title{Quantum versus Classical Online Streaming Algorithms with Advice}
\author{Kamil~Khadiev$^{1,2,3}$ \and Aliya~Khadieva$^{1,2}$ \and Mansur Ziatdinov$^{2}$ \and Dmitry Kravchenko$^{1}$ \and Alexander Rivosh$^{1}$ \and Ramis Yamilov$^{4}$ \and Ilnaz Mannapov$^{2,3}$}

\institute{
University of Latvia, Riga, Latvia
\and
Kazan Federal University, Kazan, Russia
  \and
Smart Quantum Technologies Ltd., Kazan, Russia
\and Yandex Ltd., Saint Petersburg, Russia 
                \\ \email{kamilhadi@gmail.com, aliya.khadi@gmail.com, gltronred@gmail.com, kdmitry@gmail.com, alexander.rivosh@lu.lv, ramis.yamilov@gmail.com, ilnaztatar5@gmail.com}
}

\maketitle

\begin{abstract}We consider online algorithms with respect to the competitive ratio. Here, we investigate quantum and classical one-way automata with non-constant size of memory (streaming algorithms) as a model for online algorithms. We construct problems that can be solved by quantum online streaming algorithms better than by classical ones in a case of logarithmic or sublogarithmic size of memory, even if classical online algorithms get advice bits. 
Furthermore, we show that a quantum online algorithm with a constant number of qubits can be better than any deterministic online algorithm with a constant number of advice bits and unlimited computational power.
 \\
\textbf{Keywords:} quantum computation, online algorithm, automaton, streaming algorithm, online minimization problem, branching program, BDD
\end{abstract}

\section{Introduction}

Online algorithms are a well-known computational model for solving optimization problems.
The peculiarity is that the algorithm reads an input piece by piece
and should return an answer piece by piece immediately, even if an answer can
depend on future pieces of the input. The algorithm should return an answer for minimizing an objective function (the cost of an output). The most standard method to define the effectiveness is the competitive ratio \cite{st85,kmrs86}. 
%
Typically, online algorithms have unlimited computational power.
The main restriction is a lack of knowledge on future input variables.
At the same time, it is interesting to solve online minimization problems in a case of a big input stream such that the stream cannot be stored completely in the memory. In that case, we can consider automata (streaming algorithm) as online algorithms. This model was explored in \cite{bk2009,gk2015,blm2015,kkm2018}. We are interested in {\em quantum online algorithms}. This model was introduced in \cite{kkm2018} and discussed in \cite{aakv2018}.
It is known that quantum online algorithms can be better than classical ones in the case of sublogarithmic size of memory \cite{kkm2018}. Here, we consider logarithmic size of memory (polynomial number of states) that is more common memory restriction for streaming models. In this case, quantum online algorithms with repeated test were considered in \cite{y2009}. In this paper, we focus on online streaming algorithms (one-way automata for online minimization problems) that read an input only once. This model was considered in \cite{kkkrym2017}. Authors show that quantum online streaming algorithms can be better than classical ones in the case of logarithmic size of memory. The model without ``one-way'' restriction but with sublogarithmic memory was considered in \cite{kk2019}. 
The question of comparing quantum and classical models was explored for streaming computation models (OBDDs and automata)\cite{l2009,gkkrw2007,agkmp2005,agky16,ss2005,kk2017,aakk2018}. 

 Here we focus on {\em advice complexity} measure \cite{k2016,bfklm2017,bkkkm2009,efkr2011,efkr2009,dkp2008}. In this case, an online algorithm gets some bits of advice about an input. A trusted {\em Adviser} sending these bits knows the whole input and has unlimited computational power. Deterministic and randomized online algorithms with advice are considered in \cite{h2005,k2016,bhkkr2012}. We compare the power of quantum online algorithms and classical ones in a case of using one-way automata with polynomial number of states (streaming algorithms with logarithmic size of memory). This question was not investigated before. 
Typically, the term ``Adviser'' is used in online algorithms theory; and the term ``Oracle'' for other models.  %

We consider the ``Black Hats Method'' for constructing hard online minimization problems from \cite{kkkrym2017,kk2019}. We use for construction  problems that separate power of quantum algorithms  from classical ones. Suppose that algorithms use only $O(\log n)$ bits of memory ($n^{O(1)}$ states), where $n$ is the length of the input.
%
There is a problem that has a quantum online streaming algorithm with a better competitive ratio than any classical (randomized or deterministic) online streaming algorithms, even if a classical one gets $o(\log n)$ advice bits. The problem is based on the $R$ function from \cite{ss2005}.
%
There is a problem that has quantum and randomized online algorithms with a better competitive ratio than any deterministic online algorithm, even if a deterministic one gets $o(\log n)$ advice bits. The problem is based on the Equality function and results from \cite{akv2008}.
%
For both cases, the quantum online streaming algorithms (with $O(\log n)$ qubits) have a better competitive ratio than any deterministic online algorithm with unlimited computational power.

%
Suppose that the algorithms use a constant size of memory (constant number of states). There is a problem that has the optimal online quantum streaming algorithm with $1$ qubit of memory and $1$ advice bit. A quantum online streaming algorithm with $1 $ qubit and without advice bits for the same problem has a better competitive ratio than any classical online streaming algorithm, even if it gets a non-constant number of advice bits. The problem is based on the $PartialMOD$ function from \cite{AY12,agky14} and the ``Black Hats Method''.
%
A modification of the problem has a quantum online streaming algorithm with a constant size of memory that has a better competitive ratio than any deterministic online algorithm with unlimited computational power, even if they get a constant number of advice bits and deterministic one has unlimited computational power.

  The paper is organized in the following way. Definitions are in Section \ref{sec:prlmrs}. The Black Hats Method is described in Section \ref{sec:bhm}. Quantum and randomized vs. deterministic online streaming algorithms are discussed in the first part of Section \ref{sec:application}; the second part contains results on quantum vs. classical models.
\section{Preliminaries}\label{sec:prlmrs}
%
{\bf An online minimization problem} consists of a set $\cal{I}$ of inputs and a cost function. Each input $I = (x_1, \dots , x_n)$ is a sequence of requests, where $n$ is a length of the input $|I|=n$. Furthermore, a set of feasible outputs (or solutions) ${\cal O}(I)$ is associated with each $I$; an output is a sequence of answers $O = (y_1, \dots, y_n)$. The cost function assigns a positive real value $cost(I, O)$ to $I\in{ \cal I}$ and $O\in{\cal O}(I)$. An optimal  solution for $I\in{\cal I}$ is $O_{opt}(I)=argmin_{O\in{\cal O}(I)}cost(I,O)$.

Let us define an online algorithm for this problem.
%
{\bf A deterministic online algorithm}  $A$ computes the output sequence $A(I) = (y_1,\dots , y_n)$ such that $y_i$ is computed by $x_1, \dots , x_i$.  
%
%
%
We say that $A$ is $c$-{\em competitive} if there exists a constant $\alpha\geq 0$ such that, for every $n$ and for any input $I$ of size $n$, we have: $cost(I,A(I)) \leq c \cdot cost(I,Opt(I)) + \alpha,$ where $Opt$ is an optimal offline algorithm for the problem and $c$ is the minimal number that satisfies the inequality. Also we call $c$ the {\bf competitive ratio} of $A$. If $\alpha = 0, c=1$, then $A$ is optimal.

   {\bf An online algorithm $A$ with advice} is a sequence of algorithms $A=(A^{0},\dots,A^{2^{b}-1})$ for some $b=b(n)$. The adviser chooses $\phi\in\{0,\dots,2^{b-1}\}$ that depends on an input $I$ and the algorithm $A^{\phi}$ computes an output sequence $A^{\phi}(I) = (y_1, \dots , y_n)$ such that $y_i=y_i(x_1, \dots , x_i)$.
  $A$ is $c$-competitive with advice complexity $b=b(n)$ if there exists a constant $\alpha\geq 0$ such that, for every $n$ and for any input $I$ of size $n$, there exists some $\phi\in\{0,\dots,2^{b-1}\}$ such that $cost(I,A^{\phi}(I)) \leq c \cdot cost(I,Opt(I)) + \alpha$.  
  
{\bf A randomized online algorithm} $R$ computes an output sequence
$R^{\psi}(I) = (y_1,\cdots, y_n)$ such that $ y_i$ is computed from $\psi, x_1, \cdots, x_i$, where $\psi$ is a content of the random tape, i. e., an infinite binary sequence, where every bit is chosen uniformly at random and independently of all the others. By $cost(I,R^{\psi}(I))$ we denote the random variable expressing the cost of the solution computed by $R$ on $I$.
$R$ is $c$-competitive in expectation if there exists a constant $\alpha\geq 0$ such that, for every $I$, $\exv[cost(I,R^{\psi}(I))] \leq c \cdot cost(I,Opt(I)) + \alpha$.

We use one-way automata for online minimization problems as online algorithms with restricted memory. In the paper, we use the terminology for branching programs \cite{Weg00} and algorithms. We say that an automaton computes Boolean function $f_m$ if for any input $X$ of length $m$, the automaton returns result $1$ iff $f(X)=1$. Additionally, we use the terminology on memory from algorithms. We say that an automaton has $s$ bits of memory if it has $2^s$ states. Let us present the definitions of automata that we use. A {\bf deterministic automaton} with $s=s(n)$ bits
of memory that process input $I=(x_1,\dots,x_n)$ is a $4$-tuple $(d_0,D,\Delta,Result)$. The set $D$ is a set of states, $|D|=2^s$, $d_0\in D$ is an initial state. $\Delta$ is a transition function $\Delta:\{0,\dots,\gamma-1\}\times D\to D$, where $\gamma$ is a size of the input alphabet. $Result$ is an output function $Result:D\to\{0,\dots,\beta-1\}$, where $\beta$ is a size of the output alphabet. The computation starts from the state $d_0$. Then on reading an input symbol $x_j$ it changes the current state $d\in D$ to $\Delta(x_j,d)$. In the end of computation, the automaton outputs $Result(d)$. A {\bf probabilistic automaton} is a probabilistic counterpart of the model. It
chooses from more than one transitions in each step such that each transition is associated with a probability. 
Thus, the automaton can be in a probability distribution over states during the computation. A total probability must be $1$, i.e., a probability of outgoing transitions from a single state must be $1$. Thus, a probabilistic automaton returns some result for each input with some probability. For $v\in\{0,\dots,\beta-1\}$, the automaton returns a result $v$ for an input, with bounded-error if the automata returns the result $v$ with probability at least $1/2 + \varepsilon$ for some $ \varepsilon \in (0,1/2] $. The automaton computes a function $f$ with bounded error  if it returns $f(X)$ with bounded error for each $X\in\{0,\dots,\gamma-1\}^n$. The automaton computes a function $f$ exactly if $\varepsilon=0$. 

A {\bf deterministic online streaming algorithm} with $s=s(n)$ bits
of memory that process input $I=(x_1,\dots,x_n)$ is a $4$-tuple $(d_0,D,\Delta,Result)$. The set $D$ is s set of states, $|D|=2^s$, $d_0\in D$ is an initial state. $\Delta$ is a transition function $\Delta:\{0,\dots,\gamma-1\}\times D\to D$. $Result$ is an output function $Result:D\to\{0,\dots,\beta-1\}$. The computation starts from the state $d_0$. Then on reading an input symbol $x_j$ it change the current state $d\in D$ to $\Delta(x_j,d)$ and outputs $Result(d)$.
A {\bf randomized online streaming algorithm} and a {\bf deterministic online streaming algorithm with advice} have similar definitions, but with respect to definitions of corresponding models of online algorithms.

{\bf Comment.} Note that any online algorithm can be simulated by online streaming algorithm using $n$ bits of memory.

Let us consider a {\bf quantum online streaming algorithm}. The good sources on quantum computation are \cite{nc2010,AY15}.
For some integers $n>0 $, a quantum online algorithm $ Q$ with $q$ qubits is defined
on input $I=(x_1,\dots,x_n)\in\{0,\dots,{\gamma-1}\}^n $ and outputs 
$(y_1,\dots,y_{n})\in\{0,\dots,\beta-1\}^{n}$.
A memory of the quantum algorithm is a state of a quantum register of $q$ qubits. In other words, the computation of $Q$ on an input $I$ can be traced by a $ 2^q$-dimensional vector from Hilbert space over the field of complex numbers. The initial state is a given $ 2^q$-vector $ |\psi\rangle_0$. In each step $j\in\{1,\dots,n\}$ the input variable $ x_{j} $ is tested and then a unitary $ 2^q\times2^q$-matrix $ G^{x_{j}}$ is applied:  
$
    |\psi\rangle_j = G^{x_{j}} (|\psi\rangle_{j-1}),
$ 
where  $ |\psi\rangle_j $ represents the state of the system after the $ j$-th step.
Depending on an input symbol, the algorithm can measure one or more quantum bits. If the outcome of the measurement is $u$, then the algorithm continues computing from a state $|\psi(u)\rangle$ and tha algorithm can output $Result(u)$ on this step. Here  $Result:
\{0,\ldots,2^{q}-1\}\to\{0,\dots, \beta-1\}$ is a function that converts the result of the measurement to an output variable. The algorithm $Q$ is $c$-competitive in expectation if there exists a non-negative constant $\alpha$ such that, for every $I$, $\mathbb{E}[cost(I,Q(I))] \leq c \cdot cost(I,Opt(I)) + \alpha$.

Let us describe a measurement process. Suppose that $Q$ is in a state $|\psi\rangle=(v_1, \dots, v_{2^q})$ before a measurement and the algorithm measures the $i$-th qubit. Suppose states with numbers $a^0_1,\dots,a^0_{2^{q-1}}$ correspond to $0$ value of the $i$-th qubit, and states with numbers $a^1_1,\dots,a^1_{2^{q-1}}$ correspond to $1$ value of the qubit. Then the result of the qubit's measurement is $1$ with probability $pr_1= \sum_{j=1}^{2^{q-1}}|v_{a^1_j}|^2$ and $0$ with probability $pr_0=1-pr_1$. If
the algorithm measures $v$ qubits on the $j$-th step, then $u \in\{0,\dots,2^v-1\}$ is an outcome of the measurement.

A quantum automata have the similar definition, but it returns $Result(u)$ in the end of the computation. A definition of a function computing is similar to the probabilistic case. See \cite{AY15} for more details on quantum automata.

In the paper we use results on id-OBDD. This model can be considered as an automaton with a transition function that depends on position of input head. You can read more about classical and quantum id-OBDDs in
\cite{Weg00,ss2005,agkmp2005,agky14,agky16,kk2017}. Formal definitions of
this model is in Appendix \ref{apx:obdd}. The following relations 
between models are folklore:

\begin{lemma}\label{lm:rel-obdd-sa}
 If a quantum (probabilistic) id-OBDD  $P$ of width $2^w$ computes a Boolean function $f$, then there is a quantum (probabilistic) automaton computing $f$ that uses $w + \lceil\log_2 n \rceil$ qubits (bits) of memory.
If any deterministic (probabilistic) id-OBDD  $P$ computing a Boolean function $f$ has a width at least $2^w$, then any deterministic (probabilistic) automaton computing $f$ uses at least $w$ bits of memory.
\end{lemma}

\section{The Black Hats Method for Constructing Online Minimization Problems}\label{sec:bhm}
Let us present a ``Black Hats Method''  which allows us to construct hard online minimization problems. It was defined in \cite{kkkrym2017}.
In this paper, we say a Boolean function $f$, but in fact we consider a family of Boolean functions $f=\{f_1,f_2,\dots\}$, for $f_m:\{0,1\}^m\to\{0,1\}$. We use notation $f(X)$ for $f_m(X)$ if the length of $X$ is $m$ and if it is clear from the context.

Suppose we have a Boolean function $f$ and positive integers $k,r,w,t$, where $k$ mod $t=0$, $r<w$. We define the online minimization problem $BH^t_{k,r,w}(f)$ as follows.
We have $k$ guardians and $k$ prisoners. They stay one by one in a line like $
G_1 P_1 G_2 P_2 \dots$, where $G_i$ is a guardian, $P_i$ is a prisoner. The prisoner
$P_i$ has an input $X_i$ of length $m_i$ and computes a function $f_{m_i}(X_i)$. If the
result is $1$, then the prisoner paints his hat black; otherwise, he paints it
white. Each guardian wants to know whether a number of following black hats is odd. We can say that the $i$-th guardian wants to compute $\bigoplus_{i=j}^k f_{m_i}(X_i)$. 
Formally, the problem is

{\bf Definition}[The Black Hats Method]
We have a Boolean function $f$. An online minimization problem $BH^t_{k,r,w}(f)$, for positive integers $k,r,w,t$, where $k$ mod $t=0$, $r<w$ is the following.
 Suppose we have an input $I=(x_1,\dots,x_n)$
and $k$ positive integers $m_1,\dots,m_k$, where $n=\sum_{i=1}^{k}(m_i+1)$. 
Let $I$ be such that $I=(2,X_1,2,X_2,2,X_3,2,\dots,2,X_k)$, where
$X_i\in\{0,1\}^{m_i}$, for $i\in\{1,\dots,k\}$. Let $O\in{\cal O}(I)$ and let
$O'=(y_1,\dots,y_k)$ be answer variables corresponding to input variables with
value $2$ (output variables for guardians). In other words, $y_j$ corresponds to $x_{i_j}$, where $i_j=j+\sum_{r=1}^{j-1}m_r$. Let 
$g_j(I)=\bigoplus_{i=j}^k f_{m_i}(X_i)$.
We separate all answer variables $y_i$ into $t$ blocks of length $ z=k/t $. A
cost of the $i$-th block is $c_i$. Here $c_i=r$ if $y_j=g_j(I)$, for $j\in\{(i-1)z+1,\dots,i\cdot z\}$; and $c_i=w$ otherwise.
The cost of the whole output is $cost^t(I,O)=c_1+\dots+c_t$.

If we have a quantum streaming algorithm for $f$ using a small amount of memory, then it is enough to guess the result of the first guardian to solve the problem.
Moreover, if there is no randomized streaming algorithm with small memory for $f$, then we cannot solve   $BH^t_{k,r,w}(f)$. The only way to reduce the competitive ratio is guessing the answers. 

Suppose we have a quantum streaming algorithm that uses small memory for $f$. Then long blocks can increase the gap between the competitive ratios of randomized and quantum algorithms because all guardians inside a block should return right answers.
We have a similar situation with deterministic online algorithms. In that case, we cannot guess answers; and we have a more significant gap between competitive ratios for quantum and deterministic algorithms. Advice bits can help for a classical algorithm. However, if we take sufficiently long blocks, then advice bits do not help.
The construction of the problem $BH^t_{k,r,w}(f)$ allows us to get a good competitive ratio by guessing only one bit; for example, this effect cannot be achieved by considering independent instances of the Boolean function $f$. 
These results are presented formally in theorems of this section. 

\begin{theorem}[\cite{kkkrym2017}]\label{th:bh-lower} Let $s$ be a positive integer, let $f$ be a Boolean function. Suppose there is no deterministic  automaton for $f$ that uses at most $s$ bits of memory. Then there is no $c$-competitive deterministic online streaming algorithm for $BH^t_{k,r,w}(f)$ that uses $s$ bits of memory, where $c<\frac{w}{r}$. 
\end{theorem}
\begin{theorem}[\cite{kkkrym2017}]\label{th:bh-rand-lower}
Let $s$ be a positive integer, let $f$ be a Boolean function. Suppose there is no probabilistic automaton that uses at most $s$ bits of memory and computes $f$ with bounded error. Then there is no $c$-competitive in expectation randomized online streaming algorithm $A$ for $BH^t_{k,r,w}(f)$ that uses $s$ bits of memory, where $c<2^{-z}+(1-2^{-z})w/r$, $z=k/t$.
\end{theorem}
\begin{theorem}[\cite{kkkrym2017}]\label{th:pqalgorithm} 
Let $s$ be a positive integer, let $f$ be a Boolean function. Suppose we have a quantum automaton $R$ that computes $f$ with bounded error $\varepsilon$ using $s$ qubits of memory, where $0\leq \varepsilon<0.5$. Then there is a quantum online streaming algorithm $A$ for $BH^t_{k,r,w}(f)$ that uses at most $s+1$ qubits of memory and has expected competitive ratio %
$c\leq \left(0.5(1-\varepsilon)^{z-1}\cdot(r-w) + w\right)/r$, $z=k/t$. 
\end{theorem}

For proofs of the following properties for classical models, we show that if the model does not have enough memory, then the problem can be interpreted as the ``String Guessing, Unknown History'' problem from \cite{bhkkss2014}. 
\begin{theorem}\label{th:advice}
Let $s$ be a positive integer, let $f$ be a Boolean function. Suppose there is no deterministic automaton for $f$ that uses at most $s$ bits of memory. Then there is no $c$-competitive deterministic online streaming algorithm for $BH^t_{k,r,w}(f)$ that uses $s$ bits of memory and $b$ advice bits, where $c<(hr + (t-h)w)/(tr)$, $h=\lfloor v/z \rfloor, z=k/t$, $v$ is such that $b=(1 + (1-v/k) \log_2(1-v/k)+$ $(v/k) \log_2 (v/k))k$, $0.5\cdot k\leq v <k$.
\end{theorem}
\Beginproof
Let us prove the following claim. If the online algorithm gets $b$ advice bits, then there is an input such that at least $k-b$ prisoners return wrong answers.
We prove it by induction.

Firstly, let us prove the claim for $b=k$. Then the adviser can send $(g_1,\dots,g_k)$, for $g_i=\bigoplus_{j=i}^k f(X^j)$. Then the algorithm can return right answers for all guardians.

Secondly, let us prove the claim for $b=0$. It means that the algorithm does not have any advice and we get the situation from Theorem \ref{th:bh-lower}.

Thirdly, let us consider the claim for other cases. Assume that we already proved the claim for any pair $(b'',k')$ such that $b'\leq b$, $k' \leq k$ and at least one of these inequalities is strict. We focus on the first prisoner. 

Assume that there is an input $X^1\in\{0,1\}^{m_1}$ for the first prisoner such that this prisoner cannot compute an answer with bounded error. Then we use this input and get a situation for $k-1$ prisoners and $b$ advice bits. In that case, $k-b-1$ prisoners are wrong, plus the first one is also wrong.

Assume that the algorithm always can compute an answer with a bounded error for the first prisoner.
So we can describe the process of communication with the adviser in the following way:
the adviser separates all possible inputs into $2^b$ non-overlapping groups $G_1,\dots,G_{2^b}$. After that, he sends a number of the group that contains current input to the algorithm. Then the algorithm $A$ processes the input with the knowledge that an input can be only from this group.

Let us consider three sets of groups:
\begin{itemize}
\item 
$I_0=\{G_i: \forall \sigma\in\{0,1\}^{m_1}$ such that $\sigma$ is an input for the first prisoner and $f(\sigma)=0\}$, 
\item 
$I_1=\{G_i: \forall \sigma\in\{0,1\}^{m_1}$ such that $\sigma$ is an input for the first prisoner and $f(\sigma)=1\}$, 
\item 
$I_{10}=\{G_1,\dots,G_{2^b}\}\backslash(I_1 \cup I_0)$.
\end{itemize} 

Let $|I_a|\neq 0$, for some $a\in\{0,1\}$. If $|I_a|\leq 2^{b-1}$, then as $X^1$ we take any input from any group $G\in I_a$. Hence we have at most $ 2^{b-1}$ possible groups for the adviser that distinguish inputs of next guardians. We can say that the adviser can encode them using $b-1$ bits. Therefore, we get the situation for $b-1$ advice bits and $k-1$ prisoners. The claim is true for this situation.
If $|I_a|> 2^{b-1}$, then we take any input from any group $G\not\in I_{a}$ as $X^1$. Hence, we have at most $ 2^{b-1}$ possible groups for the adviser and the same situation. The claim is true for this situation.

Let $|I_0|=|I_1|=0$. Suppose that the randomized online algorithm can solve the problem using $s'$ bits of memory, where $s'<s-b$. We can simulate the work of the algorithm with advice on $X^1$ using the automaton $B$ with the following structure. $B$ has two parts of memory: $M_1$ of $b$ bits and $M_2$ of $s'$ bits. Suppose that the adviser initialized $M_1$ by advice bits. Then $B$ invokes $A$ depends on the value of $M_1$ and advice bits. So, $B$ can simulate the work of $A$, the automaton $B$ uses $s'+b<s$ bits of memory and computes $f$.  It is a contradiction with the claim of the theorem. Therefore, the only way to compute the result for the first prisoner is sending answer as one advice bit. So, we have a situation for $k-1$ prisoners and $b-1$ advice bits.

So, it means, that for the algorithm the problem is the same as the {\em String Guessing Problem with Unknown History}($2\!-\!SQUH$) from \cite{bhkkss2014}.

The following result for the $2\!-\!SQUH$ is known:
\begin{lemma}[\cite{bhkkss2014}]
Consider an input string of length $k$ for $2\!-\!SGUH$, for some positive integer $k$. Any online algorithm that is correct in more than $\alpha k$
characters, for $0.5 \leq \alpha < 1$, needs to read at least
$\left(1 + (1-\alpha) \log_2(1-\alpha)+ \alpha \log_2 \alpha\right)k$
advice bits.
\end{lemma}

Therefore, if we want to get $v$ right answers for guardians, then we need 

$b=\left(1 + (1-\frac{v}{k}) \log_2(1-\frac{v}{k})+ \frac{v}{k} \log_2 \frac{v}{k}\right)k$.

 Because of properties of the cost function, the best case for the algorithm is getting right results about all guardians of a block. Hence, the algorithm can get $h=\lfloor v/z \rfloor$ full blocks and the cost for each of them will be $r$, for $z=k/t$. Other blocks have at least one ``wrong'' guardian, and these blocks cost $w$.
Therefore, we can construct an input such that it costs $\lfloor v/z \rfloor \cdot r + (t-\lfloor v/z \rfloor)w$, for $b=\left(1 + (1-\frac{v}{k}) \log_2(1-\frac{v}{k})+ \frac{v}{k} \log_2 \frac{v}{k}\right)k$. Hence the competitive ratio $c$ of the algorithm is $c\geq \frac{\lfloor v/z \rfloor \cdot r + (t-\lfloor v/z \rfloor)w}{tr}$, for $b=\left(1 + (1-\frac{v}{k}) \log_2(1-\frac{v}{k})+ \frac{v}{k} \log_2 \frac{v}{k}\right)k$.
\Endproof
%
%

\begin{corollary}\label{cr:nobenefit-advice}
Let $s$ be a positive integer. Suppose a Boolean function $f$ is such that no deterministic streaming algorithm uses at most $s$ bits of memory and computes $f$. Then there is no $c$-competitive deterministic online streaming algorithm that uses $s$ bits of memory and $b$ advice bits, and solves $BH^t_{k,r,w}(f)$, where $c<w/r$, $v<z$.
\end{corollary}

Let us present a randomized analog of Theorem
\ref{th:advice}. The proof is based on ideas
from \cite{kiy2018,DS90,AK13,Sh59}. We use a function $\delta_x$ in the claim of
the following theorem: $\delta_x=1$ if $x\neq 0$; otherwise, $\delta_x=0$.

\begin{theorem}\label{th:radvice}
Let $s$ be a positive integer, let $f$ be a Boolean function. Suppose there are no probabilistic automaton that compute $f$ with bounded error using space less than $s$ bits. Then any
randomized online streaming algorithm $A$ using less than $s-b$ bits, $b$
advice bits and solving $BH^t_{k,r,w}(f)$, has the expected competitive ratio 
$c\geq ( hr + \delta_{u}\cdot (2^{u-z}r + (1-2^{u-z})w) + (t-h-\delta_{u})(2^{-z}r+(1-2^{-z})w))/(tr)$, for $h=\lfloor v/z \rfloor, z=k/t, u =
v-hz$, $v$ is such that $b=\left(1 + (1-v/k) \log_2(1-v/k)+ (v/k) \log_2 (v/k)\right)k$, $0.5k\leq v <k$. 
\end{theorem}
\Beginproof
The proof is similar to the proof of Theorem \ref{th:advice}. If the online algorithm gets $b$ advice bits, then there is an input such that at least $k-b$ prisoners return wrong answers. We can prove this claim by the same way as in Theorem \ref{th:advice}, but we use the probabilistic automaton $B$ for simulating $A$ because all parts of automaton that use memory $M_2$ are probabilistic.
So, it means, that for the algorithm the problem is the same as the {\em String Guessing Problem with Unknown History}($2\!-\!SQUH$) from \cite{bhkkss2014}.
Therefore, if we want to get $v$ right answers for guardians, then we need $b=\left(1 + (1-\frac{v}{k}) \log_2(1-\frac{v}{k})+ \frac{v}{k} \log_2 \frac{v}{k}\right)k$.
We can show that the guardian that does not get an answer from the adviser (``unknown'' guardians)  cannot be computed with bounded error. Therefore, they can be only guessed with probability $0.5$. We can use the proof technique as in Theorem \ref{th:bh-rand-lower}. We use the same approach for all segments between ``known'' guardians.
\Endproof

If we have a quantum streaming algorithm for $f$; then one advice bit is enough for solving $BH^t_{k,r,w}(f)$ with a good competitive ratio:
\begin{theorem}\label{th:pqalgorithm-advice}
Let $s$ be a positive integer, let $f$ be a Boolean function. Suppose there is a quantum automaton $R$ that computes $f$ with bounded error $\varepsilon$ using $s$ qubits of memory, where $0\leq \varepsilon<0.5$. Then there is a quantum  online streaming algorithm $A$ using at most $s+1$ qubits of memory, single advice bit and solving $BH^t_{k,r,w}(f)$ such that the expected competitive ratio is $c\leq \left(0.5(1-\varepsilon)^{z-1}\cdot \left(t +1 + \frac{v^t-v}{v-1} \right)(r-w) + tw\right)/(tr)$, for $v=(1-2\varepsilon)^z, z=k/t$.
If $\varepsilon=0$, then $c=1$. 
\end{theorem}
\Beginproof
Let us describe the quantum online streaming algorithm $A$:

{\bf Step 1.} Algorithm $A$ gets $g_1=\bigoplus\limits_{i=1}^k f(X^i)$ as an advice bit. The algorithm stores current result in qubit $|p\rangle$. Then the algorithm measures $|p\rangle$ and gets $g_1$ with probability $1$. Then $A$ returns the result of the measurement as $y_1$.

{\bf Step 2.} The algorithm reads $X^1$ and computes $|p\rangle$ as a result of  CNOT or XOR of $|p\rangle$ and  $R(X^1)$, where $R(X^1)$ is the result of computation for $R$ on the input $X^1$. $A$ uses register $|\psi\rangle$ of $s$ qubits for processing $X^1$. Then the algorithm returns a result of a measurement for $|p\rangle$ as $y_2$. After that $A$ measures all qubits of $|\psi\rangle$ and sets $|\psi\rangle$ to $|0\dots 0\rangle$. The algorithm can do it because it knows a result of the measurement and can rotate each qubit such that the qubit becomes $|0\rangle$. 

{\bf Step $i$.} The algorithm reads $X^{i-1}$ and computes $|p\rangle$ as a result of  CNOT or XOR of $|p\rangle$ and  $R(X^{i-1})$.  Algorithm $A$ uses register $|\psi\rangle$ of $s$ bits on processing $X^{i-1}$. Then $A$ returns a result of the measurement for $|p\rangle$ as $y_i$. After that $A$ measures $|\psi\rangle$ and sets $|\psi\rangle$ to $|0\dots 0\rangle$

Let us compute a cost of the output for this algorithm.
Let us consider a new cost function $cost'(I,O)$. For this function, a ``right''  block costs $1$ and a ``wrong'' block costs $0$. In that case, $cost^t(I,O)=(r-w)\cdot cost'(I,O) + tw$. Therefore, in the following proof we can consider only the $cost'(I,O)$ function.

Firstly, let us compute $p_i$ that is the probability that block $i$ is a ``right'' block (or costs $1$). 
Let $i=1$. So, if the $i$-th block is ``right'', then all $z-1$  prisoners inside the block return right answers. A probability of this event is $p_1=(1-\varepsilon)^{z-1}$.

Let $i>1$. If the $i$-th block is ``right'', then two conditions should be true:

(i) All $z-1$ prisoners inside the block should return right answers. 

(ii) If we consider a number of preceding guardians that return wrong answers plus $1$ if the preceding prisoner has an error. Then this number should be even.

A probability of the first condition is $(1-\varepsilon)^{z-1}$. Let us compute a probability of the second condition.

Let $E(j)$ be the number of errors before the $j$-th guardian. It is a number of errors for previous prisoners. Let $F(j)$ be a probability that $E(j)$ is even. Therefore, $1-F(j)$ is a probability that $E(j)$ is odd.

If there is an error in a computation of the $(j-1)$-th prisoner, then $E(j-1)$ should be odd. If there is no error for the $(j-1)$-th prisoner, then $E(j-1)$ should be even. Hence, $F(j)=\varepsilon (1-F(j-1)) + (1-\varepsilon)F(j-1)= F(j-1)\cdot(1-2\varepsilon) + \varepsilon$.  Note that $F(1)=1$, because the first guardian gets the answer as the advice bit. 

$F(j)=F(j-1)\cdot(1-2\varepsilon) + \varepsilon = F(j-2)\cdot(1-2\varepsilon)^2 + (1-2\varepsilon)\varepsilon + \varepsilon=\dots =$

$=F(j-j+1)\cdot(1-2\varepsilon)^{j-1} + (1-2\varepsilon)^{j-2}\varepsilon + \dots (1-2\varepsilon)\varepsilon + \varepsilon=F(1)\cdot(1-2\varepsilon)^{j-1} + \varepsilon\sum_{l = 0}^{j-2}(1-2\varepsilon)^{l}=$
$(1-2\varepsilon)^{j-1} +\frac{1-(1-2\varepsilon)^{j-1}}{2}=\frac{1+(1-2\varepsilon)^{j-1}}{2}$

So,
$F(j)=0.5\cdot((1-2\varepsilon)^{j-1} +1)$.

The probability $p_i$ of the event is:
$p_i =(1-\varepsilon)^{z-1}\cdot\frac{1+(1-2\varepsilon)^{(i-1)z+1-1}}{2}$

So the expected cost is
$\exv[cost'(I,O)]=\sum_{i=1}^{t}\big(p_i\cdot 1 + (1-p_i)\cdot 0\big)=p_1+\sum_{i=2}^tp_i$

$\exv[cost'(I,O)]=(1-\varepsilon)^{z-1}\cdot\left(1 +\sum_{i=2}^{t}(0.5 + 0.5(1-2\varepsilon)^{(i-1)z})\right)=$

$= 0.5(1-\varepsilon)^{z-1}\cdot \left(t +1 +\sum_{i=2}^{t}(1-2\varepsilon)^{(i-1)z} \right)=$
$=0.5(1-\varepsilon)^{z-1}\cdot \left(t +1 + \sum_{i=2}^{t}\left((1-2\varepsilon)^z\right)^{i-1} \right)=$

$=0.5(1-\varepsilon)^{z-1}\cdot \left(t + 1 + (v^t-v)/(v-1) \right)$,
where $v=(1-2\varepsilon)^z$. If $\varepsilon=0$, then $\exv[cost'(I,O)]=t$.

Therefore,
$\exv[cost^t(I,O)]=0.5(1-\varepsilon)^{z-1}\cdot \left(t +1 + (v^t-v)/(v-1) \right)(r-w) + tw$, for $v=(1-2\varepsilon)^z$.
If $\varepsilon=0$, then $\exv[cost^t(I,O)]=tr$.
\Endproof
%

We can also modify the method to capture the case of several advice bits.

{\bf Definition} [The Interleaved Black Hats method]
  Let $f$ be a non-constant Boolean function. 
  Online minimization problem $IBH^{\lambda,t}_{k,r,w}(f)$ for integers $k, r, w, t, \lambda$ where $k \mod t = 0,\lambda>0$ is the following.
  %
%
  Suppose we have input $I \in \{0,1,2\}^n$ such that 
 $
  I = (2,X^1_1, 2,\ldots, 2,X^\lambda_1,
  2, \ldots,
  2,X^1_k,$ $\ldots, 2,X^\lambda_k)
  ,$
  where $X^j_i \in \{0,1\}^{m_i}$ for $i \in \{1,
  \ldots, k\}$, \( j \in \{1, \ldots, \lambda\} \), and $n = \lambda \sum_{i=1}^{k} (m_i + 1)$.
  Let $O\in{\cal O}(I)$ and $O' = (y^1_1 \ldots y_k^\lambda)$ be answer bits
  corresponding to input variables with value $2$ (guardians).
  Let $g^j_i(I) = \bigoplus_{a=i}^k f(X^j_a)$, where $i \in \{1, \ldots, k\}$ and $j \in \{1, \ldots, \lambda\}$.
  We split all output variables $y^j_i$ into $t$ blocks of length $z =
  \lambda k/t$. The cost of the $l$-th block is $c_l^j$, where $c_l^j = r$ if
  $y^j_i = g^j_i(I)$ for all \( i \in \{ (l-1)z+1, \ldots, l\cdot z \} \) and \( j \in \{1,\ldots, \lambda\} \),
  and $c_l^j = w$ otherwise. The cost of the whole output is $cost'(I,O) = \sum_{j=1}^{\lambda}\sum_{l=1}^{k/z} c^j_l$. Note that $IBH^{1,t}_{k,r,w}(f) = BH^t_{k,r,w}(f)$.


\begin{theorem}\label{th:ibh-lam}

Let $s$ be a positive integer, let $f$ be a Boolean function.
 Suppose there is a quantum automaton $Q$ that computes $f$ exactly (with zero-error) using $s$ qubits of memory. Then there is a quantum online streaming algorithm $A$ that solves \(IBH^{\lambda,t}_{k,r,w}(f)\) using at most \(s + \lambda\) qubits of memory and \(\lambda-1\) advice bits. The algorithm has expected competitive ratio  \(c=\frac{r+w}{2r}\).
There is no deterministic online algorithm $B$ computing
$IBH^{\lambda,t}_{k,r,w}(f)$ that uses $\lambda-1$ advice bits and has competitive ratio $c < w/(tr) + (t-1)/t$. 
\end{theorem}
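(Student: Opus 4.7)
The plan is to prove the two halves of the statement separately.

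\textbf{Upper bound.} I would generalize the construction from Theorem~\ref{th:pqalgorithm} to the interleaved setting. The key idea is that $\lambda-1$ advice bits suffice to encode the relative parities $g^1_1 \oplus g^j_1$ for $j=2,\ldots,\lambda$. At startup, $A$ prepares one qubit in the state $\tfrac{1}{\sqrt{2}}(\ket{0}+\ket{1})$ and measures it to obtain a random guess $\tilde g^1_1$ for $g^1_1$; it then derives $\tilde g^j_1 = \tilde g^1_1 \oplus \phi_{j-1}$, storing the $\lambda$ running parity bits in $\lambda$ qubits. During the stream, $A$ reuses the $s$ qubits of $Q$ (reset after each measurement, exactly as in the proof of Theorem~\ref{th:pqalgorithm}) to compute $f(X^j_i)$ exactly, outputs $y^j_i = \tilde g^j_i$, and updates $\tilde g^j_{i+1} = \tilde g^j_i \oplus f(X^j_i)$. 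With probability $1/2$ the guess $\tilde g^1_1$ equals $g^1_1$, so every running parity equals the correct value, every guardian is answered correctly (since $Q$ is exact), and every block costs $r$, for total cost $tr$; with probability $1/2$ every parity is flipped, every output is wrong, and every block costs $w$, for total cost $tw$. Thus $\exv[cost'(I,O)] = t(r+w)/2$, the offline optimum is $tr$, and the competitive ratio is $(r+w)/(2r)$.

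\textbf{Lower bound.} For this part I would carry out a pigeonhole/adversary argument concentrated on the first block. Since $f$ is non-constant, fix representatives $Z_0, Z_1$ with $f(Z_b) = b$. For each $\vec g = (g^1_1,\ldots,g^\lambda_1) \in \{0,1\}^\lambda$, define the input $I_{\vec g}$ by $X^j_1 = Z_{g^j_1}$ and $X^j_i = Z_0$ for $i \geq 2$; then $g^j_1(I_{\vec g}) = g^j_1$ and $g^j_i(I_{\vec g}) = 0$ for $i \geq 2$. Because of the interleaving $G^1_1 P^1_1 G^2_1 P^2_1 \ldots$, when the deterministic algorithm $B$ with advice $\phi$ outputs $y^j_1$ it has read only $\phi$ and $X^1_1,\ldots,X^{j-1}_1$, so $y^j_1$ is a function of $(\phi, g^1_1, \ldots, g^{j-1}_1)$ and, crucially, does not depend on $g^j_1$ itself. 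The system $y^j_1 = g^j_1$ ($j=1,\ldots,\lambda$) therefore has a unique solution $\vec g_\phi$ for each fixed $\phi$: $g^1_1$ is forced by $\phi$ alone, $g^2_1$ is then forced given $\phi$ and $g^1_1$, and so on. Since the adviser can produce at most $2^{\lambda-1}$ distinct $\phi$'s, at most $2^{\lambda-1}$ of the $2^\lambda$ inputs $I_{\vec g}$ can have all first-position guardian outputs correct, regardless of the adviser's strategy. Hence some $\vec g$ yields a wrong output in block $1$; since a block costs $w$ as soon as any of its outputs is wrong and $c_l \geq r$ always, we obtain $cost'(I_{\vec g}, B(I_{\vec g})) \geq w + (t-1)r$, giving a competitive ratio at least $w/(tr) + (t-1)/t$.

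\textbf{Main obstacle.} The delicate step is the structural claim that $y^j_1$ depends only on $(\phi, g^1_1, \ldots, g^{j-1}_1)$; this relies on pre-committing to representatives $Z_b$ so that the algorithm's view is parameterised cleanly by $\vec g$ rather than by the raw prisoner strings. Once this is in place, the pigeonhole over advice strings and the one-wrong-block cost computation are routine. A secondary bookkeeping concern is isolating the $\lambda$ parity qubits from the $s$ computation qubits of $Q$, so that resetting $Q$ between prisoners does not disturb the stored parities; this is what justifies the $s+\lambda$ qubit budget.
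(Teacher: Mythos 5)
Your proposal is correct and follows essentially the same strategy as the paper: for the upper bound, guess the running parity of one instance with a Hadamard-measured qubit and use the $\lambda-1$ advice bits to fix the remaining instances; for the lower bound, a pigeonhole argument over the $2^{\lambda-1}$ advice strings against $2^{\lambda}$ adversarial inputs. Two implementation choices differ, and each is worth noting. In the upper bound the paper sends the absolute parities $g^j_1$ for $j\geq 2$, so that when the guess for $g^1_1$ is wrong only the first instance's guardians err, and it must then invoke the structural fact that every block of length $u=\lambda k/t\geq\lambda$ contains a first-instance guardian to conclude that every block still costs $w$; your relative-parity encoding $g^1_1\oplus g^j_1$ makes \emph{all} outputs wrong in that case, so the block-coverage observation becomes unnecessary. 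In the lower bound the paper plants the distinguishing values in the \emph{last} prisoners $X^j_k$, finds two inputs $I_\sigma,I_{\sigma'}$ in the same advice group via pigeonhole, and derives a contradiction at the first coordinate where they differ, whereas you plant them in the \emph{first} prisoners and count: for each advice string the triangular dependence of $y^j_1$ on $(\phi,g^1_1,\dots,g^{j-1}_1)$ admits a unique all-correct $\vec g$, so at most $2^{\lambda-1}$ of the $2^{\lambda}$ inputs can be fully correct on the first round of guardians. Both arguments are sound; yours additionally needs (and correctly has, since $u\geq\lambda$) the fact that all of $G^1_1,\dots,G^\lambda_1$ lie in the first block. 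The only blemishes are notational: your $\exv[cost'(I,O)]=t(r+w)/2$ should refer to the actual cost function rather than the $0/1$-normalized one, and the claimed ratio is an expectation over the algorithm's internal randomness, as in the paper.
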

\Beginproof
The algorithm $A$ is similar to the algorithm from the proof of Theorem
\ref{th:pqalgorithm}.  Let us give a brief description of the algorithm \(A\).
The only step of the algorithm that is different from others is step 1.1: on
this step algorithm guesses a value of \(y^1_1\) and stores it in qubit $|p_1\rangle$. Steps 1.j use
an advice bit learning a value \( y^j_1 \) for \( j \in \{2,\ldots,\lambda\} \) from
the adviser and storing the value in a qubit $|p_1\rangle$. Steps 2.1--\(k\).\(\lambda\) use the corresponding \( X^j_i \) to
update the value of \(|p_i\rangle\) and to return \( y^j_i \).

{\bf Step 1.1.} The algorithm \(A\) guesses \(y^1_1\) with
equal probabilities and stores it in a qubit \(|p_1\rangle\): the algorithm initialize the qubit \(|p_1\rangle=\frac{1}{\sqrt{2}}|0\rangle+\frac{1}{\sqrt{2}}|1\rangle\). Then \(A\) measures $|p_1\rangle$ and returns a result of the measurement as \( y^1_1\).

{\bf Step 1.2.} The algorithm $A$ gets $g_1^2=\bigoplus_{i=1}^k f(X^2_i)$ as an advice bit
and stores it in a qubit $|p_2\rangle$. Then $A$ measures the qubit and  returns a result of the measurement as \( y^2_1\).

{\bf Step 1.3.} The algorithm $A$ gets $g^3_1=\bigoplus_{i=1}^k f(X^3_i)$ an advice bit and stores it in a qubit $|p_3\rangle$. Then $A$ measures the qubit and  returns a result of the measurement as \( y^3_1\).

\ldots

{\bf Step 1.$\lambda$.} The algorithm $A$ gets $g^{\lambda}_1=\bigoplus_{i=1}^k f(X^{\lambda}_i)$ as an advice bit
and stores it in a qubit $|p_\lambda\rangle$. Then $A$ measures the qubit and returns a result of the measurement as \( y^\lambda_1\).

{\bf Step 2.1.} The algorithm $A$ reads $X^1_1$ and computes $|p_1\rangle$ as a result of CNOT or XOR gate for $|p_1\rangle$ and $R(X^1_1)$. Then $A$ measures the qubit and returns a result of the measurement as \( y^1_{2}\). Here $R(X^1_1)$ is a result of computing $R$ on $X^1_1$. $A$ uses register $|\psi\rangle$ of $s$ qubits for processing $X^1_1$. In the end of the step $A$ measures all qubits of $|\psi\rangle$ and sets $|\psi\rangle$ to $|0\dots 0\rangle$. The algorithm can do it, because it knows a result of the measurement and can rotate each qubit such that the qubit becomes $|0\rangle$. 

\ldots

{\bf Step $i$.$j$.} 
The algorithm reads $X^{j}_{i-1}$ and computes $|p_j\rangle$ as a result of CNOT or XOR gate for $|p_j\rangle$ and $R(X^{j}_{i-1})$.
Then $A$ measures the qubit and returns a result of the measurement as \( y^j_i\). In the end of the step $A$ measures all qubits of $|\psi\rangle$ and sets $|\psi\rangle$ to $|0\dots 0\rangle$. 

\ldots

{\bf Step $k$.$\lambda$.} 
The algorithm reads $X^{\lambda}_{k-1}$ and computes $|p_\lambda\rangle$ as a result of CNOT or XOR gate for $|p_j\rangle$ and $R(X^{\lambda}_{k-1})$.
Then $A$ measures the qubit and returns a result of the measurement as \( y^{\lambda}_{k}\).

Let us consider any input $I$.
If the first guardian of the first instance $G^1_1$ guesses the right answer, then all guardians of the first instance $G^1_i$ are also right, for $i\in \{1,\dots, k\}$. If $G^1_1$ guesses the wrong answer, then all $G^1_i$ are wrong.
Guardians of other instances are always right, because the algorithm gets $g^j_1$ as advice bits, for \(j\in \{2,\dots, \lambda\}\).

Let us compute an expected cost of an output $A(I)$ for the input $I$.
We know that $k \mod t=0$. Therefore, the following statement holds: $u\leq \lambda$, where $u=\lambda k/t$. Hence, each block contains at least one guardian of the first instance. If the guess of the guardian $G^1_1$ is wrong, then each block contains at least one wrong guardian and the cost of the input is $wt$. If the guess of the guardian $G^1_1$ is right, then all guardians return right answers and the cost of the input is $rt$. The guardian $G^1_1$ chooses $0$ or $1$ with equal probability. Therefore, \(\exv{\mathrm{cost}(I,A(I))}=0.5rt + 0.5wt\).
So, expected competitive ratio is $c=\frac{0.5rt + 0.5wt}{rt}=\frac{r + w}{2r}$.

Let us show nonexistence Of Algorithm $B$. We want to prove that any deterministic algorithm fails even if we give
$\lambda-1$ advice bits.
Suppose that there exists some algorithm B that uses $\lambda-1$ advice bits
and is $c$-competitive for some $c < w/(tr) + (t-1)/t$.
Let us consider inputs ${\cal I}$ with the following property. Any $I\in {\cal I}$ is such that $I=2,X^1_1\dots,2,X^{\lambda}_1,2\dots,2,X^1_k\dots,2,X^{\lambda}_k$, where $f(X^j_i)=0$, for any $j\in\{1,\dots,\lambda\}$, $i\in\{1,\dots,k-1\}$. 
Let $\sigma=(\sigma_1,\dots,\sigma_{\lambda})\in\{0,1\}^{\lambda}$. Let input $I_\sigma\in{\cal I}$ be such that $f(X^j_k)=\sigma_j$, for $j\in\{1,\dots,\lambda\}$.

The adviser sends $\lambda-1$ bits of advice to the algorithm $B$, and this advice separates all possible inputs into $2^{\lambda-1}$ groups.
By Pigeonhole Principle there are $\sigma,\sigma'\in\{0,1\}^{\lambda}$, such that $\sigma\neq \sigma'$ and they both belong to the same group of the adviser. Let integer $\kappa$ be such that $\sigma_j=\sigma'_j$ for $j<\kappa$ and $\sigma_\kappa \neq \sigma'_\kappa$

The results $y^\kappa_k$ of algorithm $B$ does not depends on $X^\kappa_k$, because it is returned before reading this part of the input. It means that $y^\kappa_k$ depends only on the group and previous input. 
Therefore the algorithm $B$ returns the same $y^\kappa_k$ on $I_{\sigma}$ and $I_{\sigma'}$. At the same time, $g^\kappa_k(I_\sigma)=\sigma_\kappa\neq \sigma'_\kappa=g^\kappa_k(I_{\sigma'})$. Therefore, the algorithm returns wrong answer at least on one of guardians for $I_{\sigma}$  or $I_{\sigma'}$. Let this input be $I_w$.

As the IBH problem requires correct answers for each of guardians, the whole block costs $w$.
Finally, the cost of algorithm $B$ for input $I_w$ is at least $w + (t-1)r$ and it is
\(c_0\)-competitive, for some \( c_0 \geq w/(tr) + (t-1)/t \). 
We get a contradiction.
\Endproof

\section{Application}\label{sec:application}
Let us discuss the applications of the Black Hats Method. In this section, we present examples of problems that allow us to show the benefits of quantum computing in the case of online streaming algorithms. 
Here we use results for OBDDs (See Appendix \ref{apx:obdd})
Recall that $BH^t_{k,r,w}(f)$ is a black hat problem for $k$ guardians, $t$ blocks of guardians, $r$ and $w$ are costs for a right and a wrong answers of a block, respectively, $z=k/t$ and $k$ mod $t=0$.

\subsection{Polylogarithmic Size of Memory}
We start by analyzing the model with polylogarithmic size of memory.
Let us apply the Black Hats Method from Section \ref{sec:bhm} to  a Boolean function $R_{\nu, l, m,u}:\{0,1\}^n\to\{0,1\}$ from \cite{ss2005}:
Let $|1\rangle, \dots , |u\rangle$ be the standard basis of $\mathbb{C}^u$. Let
$V_0$ and $V_1$ denote the subspaces spanned by the first and last $u/2$ of these basis vectors.
Let $0 < \nu < 1/\sqrt{2}$. The input for the function $R_{\nu, l, m,u}$  consists of $3l(m + 1)$ Boolean variables $a_{i,j}, b_{i,j}, c_{i,j}, 1 \leq i \leq l , 1 \leq j \leq m+1$, which are interpreted as universal $(\epsilon, l, m )$-
codes for three unitary $u \times u$-matrices A, B, C, where $\epsilon = 1/(3u)$. The function takes the
value $z \in \{0, 1\}$ if the Euclidean distance between $CBA|1\rangle$ and $V_z$ is at most $\nu$. Otherwise the function is undefined.
It is known from \cite{ss2005} that there is a quantum OBDD that computes $R_{\nu, l, m,u}$ using linear width.  At the same time, any deterministic or probabilistic OBDD requires exponential width. Therefore, we have the following result due to Lemma \ref{lm:rel-obdd-sa}.

\begin{lemma}\label{lm:bhr}
1. There is a quantum automaton that computes $R_{\nu, l, m,u}$ with bounded error $\nu^2$. using $O(\log n)$ qubits. 
2. There is no probabilistic automaton that computes $R_{\nu, l, m,u}$ with bounded error using $n^{o(1)}$ bits of memory .
\end{lemma}
Let us consider the $BHR^t_{k,r,w,\nu, l, m,u} =BH^t_{k,r,w}(R_{\nu, l, m,u})$ problem.
The following properties of the problem are based on Lemma \ref{lm:bhr} and Theorems  \ref{th:bh-lower}, \ref{th:bh-rand-lower}, \ref{th:pqalgorithm}- \ref{th:pqalgorithm-advice}. 

\begin{theorem}
Suppose $P^t=BHR^t_{k,r,w,\nu, l, m,u}$,  $t\in\{1,\dots,k\}$, $k = (\log_2 n)^{O(1)}$,  $v$ is such that

$b=\left(1 + (1-v/k) \log_2(1-v/k)+ (v/k) \log_2 (v/k)\right)k$, $0.5k\leq v <k$; then

1. There is no $c$-competitive deterministic online streaming algorithm with $n^{o(1)}$ bits of memory and $b$ advice bits that solves $P^t$, where $c<{\cal C}_1=\frac{w}{r}$, $v<z$. There is no $c$-competitive deterministic online streaming algorithm that uses $n^{o(1)}$ bits of memory and $b$ advice bits, and solves $P^t$, where $c<{\cal C}_2=\frac{hr + (t-h)w}{tr}$, $h=\lfloor v/z \rfloor$.

2. There is no randomized streaming algorithm using $n^{o(1)}$ bits of memory, $b<k$ advice bits and solving $P^t$ that is $c$-competitive for 
$c<{\cal C}_4=\frac{ hr + \delta_{u}\cdot (2^{u-z}r + (1-2^{u-z})w) + (t-h-\delta_{u})(2^{-z}r+(1-2^{-z})w)}{tr}$, $h=\lfloor v/z \rfloor, u =
v-hz$.

3. There is a quantum online streaming algorithm that uses $O(\log n)$ qubits and solves $P^t$. The algorithm $Q$ is $c$-competitive in expectation, where
$c\leq \left(\left(1-\nu^2\right)^{z-1}\cdot 0.5 \cdot (r-w) + w\right)/r<{\cal C}_1,{\cal C}_2,{\cal C}_3,{\cal C}_4$.


\end{theorem}

This theorem gives us the following important results.
There is a quantum online streaming algorithm with logarithmic size of memory for  $BHR^t_{k,r,w,\nu, l, m,u}$ having a better competitive ratio than
 any classical (deterministic or randomized) online streaming algorithm with polylogarithmic size of memory, even if they use a polylogarithmic number of advice bits

\subsection{Sublogarithmic Size of Memory}
We continue by analyzing the model with sublogarithmic memory.
Let us discuss the $PartialMOD_m^{\beta}$ function from \cite{AY12,agky14,agky16}. Feasible inputs for the problem are $X\in\{0,1\}^n$ such that $\#_1(X)=v\cdot 2^{\beta}$, where $\#_1(X)$ is the number of $1$s and $v\geq 2$. $PartialMOD_m^{\beta}(X)=v$ $\mod$ $2$.
It is known from \cite{AY12,agky14,agky16} that there is a quantum automaton that computes $PartialMOD_m^{\beta}$ using a single qubit and has not error. At the same time, any deterministic or probabilistic automaton and id-OBDDs requires $2^{\beta}$ states (width). Hence, we have the following result due to Lemma \ref{lm:rel-obdd-sa}.

\begin{lemma}\label{lm:bhp}
1. There is a quantum automaton that computes $PartialMOD^{\beta}_m$ exactly  using $1$ qubit;
2. There is no probabilistic automaton that computes $PartialMOD^{\beta}_m$  with bounded error using less than $\beta$ bits.
\end{lemma}

Let us apply the Black Hats Method to $f=PartialMOD_m^{\beta}$.  The proof of the following theorem is based on Theorems \ref{th:pqalgorithm-advice},\ref{th:advice},\ref{th:bh-lower},\ref{th:ibh-lam} and Lemma \ref{lm:bhp}.

\begin{theorem}
Suppose $P^t=BHM^t_{k,r,w}=BH^t_{k,r,w}(PartialMOD^{\beta}_m)$, $t\in\{1,\dots,k\}$, $\beta=O(\log n)$, $k=o(\log n), \beta \cdot k<\log_2 n$,  $v$ is such that
%
$b=\left(1 + (1-v/k) \log_2(1-v/k)+ (v/k) \log_2 (v/k)\right)k$, $0.5k\leq v <k$; then

1. There is no deterministic online streaming algorithm using $s<\beta$ bits of memory, $b <k$ advice bits and solving $P^t$ that is $c$-competitive for $c<{\cal C}_2=\frac{hr + (t-h)w}{tr}$, $h=\lfloor v/z \rfloor$.

2. There is no randomized streaming algorithm using $s<\beta$ bits of memory, $b <k$ advice bits and solving $P^t$ that is $c$-competitive for 
$c<{\cal C}_5=\frac{ hr + \delta_{u}\cdot (2^{u-z}r + (1-2^{u-z})w) + (t-h-\delta_{u})(2^{-z}r+(1-2^{-z})w)}{tr}$, $h=\lfloor v/z \rfloor, u =
v-hz$.


3. There are quantum online streaming algorithms $A$ and $B$ for $P^t$. The algorithm $A$ gets $1$ advice bit, uses $1$ qubit of memory and $A$ is optimal. The algorithm $B$ does not get advice bits, uses $1$ qubit of memory and has expected competitive ratio $c\leq\frac{r+w}{2r}<{\cal C}_1,{\cal C}_2,{\cal C}_5$.

4. Suppose $IBHM=IBH^{\lambda,1}_{k,r,w}(PartialMOD^{\beta}_m)$, $\lambda=const$. Then there is a quantum online streaming algorithm $Q'$ with $\lambda$ qubits and $\lambda-1$ advice bits such that the algorithm computes $IBHM$ with expected competitive ratio $c\leq \frac{w+r}{2r}<{\cal C}_1$. Any deterministic online algorithm for $IBHM$ with unlimited computation power has competitive ratio $c = {\cal C}_1=w/r$. 
\end{theorem}
\Beginproof
 Claims 1, 2 and 4 follow from Theorems \ref{th:pqalgorithm-advice},\ref{th:advice},\ref{th:bh-lower},\ref{th:ibh-lam} and Lemma \ref{lm:bhp}.
Let us prove Claim 3.
The proof of the theorem follows from Theorems 1, 11 and Algorithm 1 from \cite{kkm2018}.
Let us describe an algorithm $B$ for $BH^t_{k,r,w}(f)$, $f=PartialMOD_m^\beta$.

{\bf Step $1$.} The algorithm emulates guessing for $g_1=\bigoplus\limits_{j=1}^{k}f(X^j)$. $B$ starts on a state $|\psi\rangle=\frac{1}{\sqrt{2}}|0\rangle+\frac{1}{\sqrt{2}}|1\rangle$. The algorithm measures the qubit $|\psi\rangle$ before reading any input variables. $B$ gets $|0\rangle$ or $|1\rangle$ with equal probabilities. The result of the measurement is $y_1$.

{\bf Step $2$.} The algorithm reads $X^1$.  Let an angle $\xi=\pi/2^{\beta+1}$. Algorithm $B$ rotates the qubit by the angle $\xi$ if the algorithm meets $1$. It does not do anything for $0$.

{\bf Step $3$.} If $B$ meets $2$, then it measures the qubit $|\psi\rangle=a|0\rangle+b|1\rangle$. If $PartialMOD_{m_1}^\beta(X^1)=1$, then the qubit is rotated by an angle $\pi/2+v\cdot \pi$, for some integer $v$, else the qubit is rotated by an angle $w\cdot \pi$, for some integer $w$. If $y_1=1$, then $a\in\{1,-1\}$ and $b=0$. If $y_1=0$, then $a=0$ and $b\in\{1,-1\}$. The result of the measurement for the qubit $|\psi\rangle$ is $y_2$. 

{\bf Step $4$.} The algorithm reads $X^2$. Algorithm $B$ rotates the qubit $|\psi\rangle$ by the angle $\xi$ if the algorithm meets $1$. It does not do anything for $0$.

{\bf Step $5$.} If $B$ meets $2$ then it measures the qubit $|\psi\rangle=a|0\rangle+b|1\rangle$. If $f(X^2)=PartialMOD_{m_2}^\beta(X^2)=1$, then the qubit is rotated by an angle $\pi/2+v\cdot \pi$, for some integer $v$, else the qubit is rotated by an angle $w\cdot \pi$, for some integer $w$. Note that  before Step 4 if $y_2=1$, then $|\psi\rangle=|1\rangle$; and if $y_2=0$ then $|\psi\rangle=|0\rangle$. Therefore, if $y_3=PartialMOD_{m_2}^\beta(X^2)\oplus y_2 =1$, then $a\in\{1,-1\}$ and $b=0$. If $y_3=PartialMOD_{m_2}^\beta(X^2)\oplus y_2 =0$, then $a=0$ and $b\in\{1,-1\}$. The algorithm measures $|\psi\rangle$ and outputs $y_3 = PartialMOD_{m_2}^\beta(X^2)\oplus y_2$. 

{\bf Step $i$.} The step is similar to Step 4, but the algorithm reads $X^{i-1}$ and calculates $PartialMOD_{m_2}^\beta(X^{i-1})$.

{\bf Step $i+1$.} The step is similar to Step 5, but the algorithm outputs $y_{i} = PartialMOD_{m_2}^\beta(X^{i-1})\oplus y_{i-1}$.

{\bf Step $2k+2$.} The algorithm reads and skips the last part of the input. $B$ does not need these variables, because it guesses $y_1$ and using this value we already can obtain $y_{2},...,y_{k}$ without $X^{k}$.
\vspace{0.3cm}

Let us describe an algorithm $A$ with an advice bit for $BH^t_{k,r,w}(f)$, $f=PartialMOD_m^\beta$.

{\bf Step 1.} Algorithm $A$ gets $g_1=\bigoplus\limits_{j=1}^{k}f(X^j)$ as the advice bit, outputs it as $y_1$ and initialize $|\psi\rangle=|g_1\rangle$.

Other steps are similar to the algorithm $B$.
\Endproof

This theorem gives us the following important results.
%
Firstly, there is a quantum online streaming algorithm with $1$ qubit of memory for  $BHM^t_{k,r,w}$ having a better competitive ratio than
 any classical (deterministic or randomized) online streaming algorithm with sublogarithmic size of  memory, even if the classical one uses advice bits.
%
Secondly, there is an optimal quantum online streaming algorithm with $1$ qubit of memory and $1$ advice bit for the problem.
%
Finally, in a case of the IBHM Problem, a quantum online streaming algorithm with a constant size of memory and a constant number of advice bits is better than any deterministic online algorithm with unlimited memory and the same number of advice bits.

%
%
\bibliographystyle{alpha}
\bibliography{tcs}
\appendix
\section{Definition of OBDDs}\label{apx:obdd}

OBDD is a restricted version of a branching program (BP). BP over a set $X$ of $n$ Boolean variables is a directed acyclic graph with two distinguished nodes $s$ (a source node) and $t$ (a sink node). We denote it $P_{s,t}$ or just $P$. Each inner node $v$ of $P$ is associated with a variable $x\in X$. A {\em deterministic} BP has exactly two outgoing edges labeled $x=0$ and $x=1$ respectively for each node $v$. The program $P$ computes a Boolean function $f(X)$ ($f:\{0,1\}^n \rightarrow \{0,1\}$) as follows: for each $\sigma\in\{0,1\}^n$ we let $f(\sigma)=1$ iff there exists at least one $s-t$ path (called {\em accepting} path for $\sigma$) such that all edges along this path are consistent with $\sigma$. A {\em size} of branching program $P$ is a number of nodes.
Ordered  Binary Decision Diagram (OBDD) is a BP with following restrictions: 
(i) Nodes can be partitioned into levels $V_1, \ldots, V_{\ell+1}$ such that  $s$ belongs to the first level  $V_1$ and sink node $t$ belongs to the last level $V_{\ell+1}$. Nodes from level $V_j$ have outgoing edges only to nodes of level $V_{j+1}$, for $j \le \ell$.
(ii)All inner nodes of one level are labeled by the same variable.
(iii)Each variable is tested on each path only once.
A {\em width} of a program $P$ is $ width(P)=\max_{1\le j\le \ell}|V_j|. $ 
OBDD $P$ reads variables in its individual  order
$\theta(P)=(j_1,\dots,j_n)$. Let $id=(1,\dots,n)$ be a natural order of input variables. If OBDD reads input variables in the order $id$, then we denote the model as id-OBDD.

Probabilistic OBDD (POBDD) can have more than two edges for a node, and we choose one of them using a probabilistic mechanism. POBDD $P$ computes a Boolean function $f$ with bounded error $0.5-\varepsilon$ if probability of the right answer is at least $0.5+\varepsilon$.

Let us define a quantum OBDD. It is given in different terms, but they are equivalent, see \cite{agkmp2005} for more details.
For a given $ n>0 $, a quantum OBDD $ P$ of width $d$ defined on $ \{0,1\}^n $, is a 4-tuple
$
    P=(T,|\psi\rangle_0,Accept,\pi),
$
where
$ T = \{ T_j : 1 \leq j \leq n \mbox{ and } T_j = (G_j^0,G_j^1) \} $ are ordered
pairs of (left) unitary matrices representing transitions. Here $ G_j^0 $ or
$ G_j^1 $ is applied on the $j$-th step. A choice is determined by the input bit.  
The vector $|\psi\rangle_0$ is a initial vector from the $ d $-dimensional Hilbert space over the field of complex numbers.  $ |\psi\rangle_0=|q_0\rangle$ where $ q_0 $ corresponds to the initial node.
$ Accept \subset \{1,\ldots,d\} $ is a set of accepting nodes.
$ \pi $ is a permutation of $ \{1,\ldots,n\} $. It defines the order of input bits.
%
  For any given input $ \nu \in \{0,1\}^n $, the computation of $P$ on $\nu$ can be traced by the $d$-dimensional vector from a Hilbert space over the field of complex numbers. The initial one is $ |\psi\rangle_0$. In each step $j$, $1 \leq j \leq n$, the input bit $ x_{\pi(j)} $ is tested and then the corresponding unitary operator is applied:  
$
    |\psi\rangle_j = G_j^{x_{\pi(j)}} (|\psi\rangle_{j-1}),
$ 
where  $ |\psi\rangle_j $ represents the state of the system after the $ j$-th step.
The quantum OBDD can measure one or more qubits on any steps. Let the program be in state $|\psi\rangle=(v_1,
\dots, v_d)$ before a measurement and let us measure the $i$-th qubit.
Let states with numbers $j^0_1,\dots,j^0_{d/2}$ correspond to the $0$ value of the $i$-th qubit, and states with numbers $j^1_1,\dots,j^1_{d/2}$ correspond to the $1$ value of the $i$-th qubit.
The result of the measurement of the $i$-th qubit is $1$ with probability $pr_1= \sum_{z=1}^{d/2}|v_{j^1_z}|^2$ and $0$ with probability $pr_0=1-pr_1$.
The program $P$ measures all qubits at the end of the computation process. The program  accepts the input $ \sigma $ (returns $1$ on the input) with probability $
    Pr_{accept}(\nu)=\sum_{i \in Accept} v^2_i
$, for $ |\psi\rangle_n=(v_1,\dots,v_d)$.
$P_{\varepsilon}(\nu)=1$ if $P$ accepts input $\nu\in\{0,1\}^n$ with  probability at least $ 0.5+\varepsilon$, and $P_{\varepsilon}(\nu)=0$ if  $P$ accepts the input $\nu\in\{0,1\}^n$ with  probability at most $ 0.5-\varepsilon$, for $ \varepsilon \in (0,0.5] $.
We say that a function $f$ is computed by $ P$ with a bounded error if there exists $ \varepsilon \in (0,0.5] $ such that $ P_{\varepsilon}(\nu)=f(\nu)$ for any  $\nu\in\{0,1\}^n$. We can say that $P$ computes $f$ with a bounded error $0.5-\varepsilon$. 

We can say that an automaton is an id-OBDD such that a transition function for each level is the same. Note that id-OBDD is OBDD with an order $id=(1,\dots,n)$.

\end{document}